\newcommand{\fsm}{\texttt{FSM}}
\newcommand{\racket}{\texttt{Racket}}
\newcommand{\dfa}{\texttt{DFA}}
\newcommand{\dfas}{\texttt{DFA}s}
\newtheorem{theorem}{Theorem}
\title{Visual Designing and Debugging of \\ Deterministic Finite-State Machines in \fsm}
\author{Marco T. Moraz\'an
\institute{Seton Hall University}
\email{morazanm@shu.edu}
\and
Joshua M. Schappel
\institute{Seton Hall University}
\email{schappjo@shu.edu}
\and
Sachin Mahashabde
\institute{Seton Hall University}
\email{mahashsa@shu.edu}
}
\begin{document}
\maketitle

\begin{abstract}
This article presents a visualization tool for designing and debugging deterministic finite-state machines in \fsm--a domain specific language for the automata theory classroom. Like other automata visualization tools, users can edit machines and observe their execution, given some input. Unlike other automata visualization tools, the user is not burdened nor distracted with rendering a machine as a graph. Furthermore, emphasis is placed on the design of machines and this article presents a novel design recipe for deterministic finite-state machines. In support of the design process, the visualization tool allows for each state to be associated with an invariant predicate. During machine execution, the visualization tool indicates if the proposed invariant holds or does not hold after each transition. In this manner, students can validate and debug their machines before attempting to prove partial correctness or submitting for grading. In addition, any machine edited with the visualization tool can be rendered as executable code. The interface of the visualization tool along with extended examples of its use are presented.
\end{abstract}

\maketitle

\section{Introduction}
The typical Computer Science student, trained to program, finds an Automata Theory course very challenging and sometimes even overwhelming. This occurs because an Automata Theory course is typically taught in a manner that goes against the grain of what students have been taught. That is, students are asked to solve problems without being able to test and get immediate feedback on their solutions. For example, in a programming course, a student may be asked to write a program to decide if \texttt{s} is a substring of \texttt{S}. Typically, a student implements their solution in some programming language and gets immediate feedback from the compiler and unit tests. This immediate feedback is nonexistent when students are asked to develop a finite-state machine to solve the same problem by pencil-and-paper. This leads, more often than not, to buggy solutions and low grades.

The problem is further compounded if students \emph{design} their solutions. In programming, designing solutions means that loop \cite{Hoare} and accumulator \cite{HtDP,HtDP2} invariants are developed. Students may use Hoare logic to verify that invariants hold, thus giving students confidence in the algorithms they develop. No such framework is used in the typical Automata Theory classroom despite the fact that a finite-state machine is a representation of a program. Given that the machines developed by beginners may be unnecessarily complex (obfuscating their correctness), it is desirable to develop a teaching methodology and accompanying tools to assist students with the design of finite-state machines.

Given that it is reasonable for Computer Science students to be able to experiment with their designs, a domain-specific language (\texttt{DSL}), \fsm \ (\textbf{F}unctional \textbf{S}tate \textbf{M}achines), was developed \cite{fsm}. This \texttt{DSL} (embedded in \texttt{Racket}) allows students to implement finite-state machines. \fsm \ provides random testing facilities that provide students (and instructors) with immediate feedback on how well a given machine works. In addition, \fsm \ has a tailor-made error-messaging system that provides students with clear feedback messages \cite{fsmerrors}. This library has resulted in students being able to debug their machines before submitting them for grading. Furthermore, it allows students to implement the machine-building algorithms they develop as part of their constructive proofs. In this manner, students can test their algorithms before attempting to complete a formal proof. Thus, students experience less frustration and earn higher marks.

Although apathy towards Automata Theory has been reduced, many students still felt that they needed a tool to visualize their machines working. In fact, many students felt that they needed a tool to test state invariants during machine execution. This article presents a visualization tool to help students validate state invariants during the execution of a deterministic finite-state automaton (\dfa). Instead of focusing on state diagrams, as most visualization tools for finite-state machines do, the \fsm \ visualization tool focuses on state transitions and on state invariants. Students can provide input to a state machine and see the rules used to move from one state to another as input is consumed. In addition, students can provide invariant predicates for states. As machine execution advances, students can see if the invariants hold. The article is organized as follows. Section \ref{RW} discusses related work on \dfa \ design and visualization. Section \ref{fsm} briefly outlines the \fsm \ interface. Section \ref{design} outlines the \dfa \ design process taught to students. Section \ref{tool} describes the \fsm \ visualization tool's interface and how invariants are incorporated into the design process. Section \ref{ex} presents two examples of how the visualization tool is used by students. Finally, Section \ref{concl} presents concluding remarks and directions for future work.

\section{Related Work}
\label{RW}

\subsection{\dfa \ Design}
How to design \dfas \ is noticeably absent from most automata theory textbooks. The unspoken assumption is that students are simply able to design \dfas \ after studying a few examples. This is an unreasonable expectation just as is to expect students to be able to design programs after studying a few examples. The sole mention of design the authors are aware of is found in Rich's textbook \cite{Rich}. This textbook asks the reader to, in essence, think about the following two questions:
\begin{alltt}
     What does each of the machine's states need to record?

     If the language of the machine is infinite, what strings
     ought to take the machine to the same state?
\end{alltt}
The first question is asking for the role of each state to be defined. In other words, there must be an invariant property of the input read so far that is recorded in each state. The second question is asking to carefully design the transition function. No systematic steps are put forth to answer these questions. The work presented in this article also has students answer these questions. In contrast, however, it puts forth a systematic development strategy in the form of a design recipe \cite{HtDP,HtDP2}. To answer the first question, an informal statement describing the meaning of each state is required. In addition, these informal statements must be coded as invariant predicates used to validate the roles of the state. The second question is answered by developing a transition relation which is verifiable. That is, a proof of partial correctness must be developed. In essence, students develop a transition relation with an eye to developing a proof that establishes state invariant predicates always hold when a word is consumed by the machine.

The authors are not aware of any work that addresses the direct verification of \dfas \ in an automata theory course. Given that a \dfa \ models a program, it is natural to use proven techniques from program development in their construction. We build on the work done with design recipes \cite{HtDP, HtDP2,Mor7,Mor6,Mor5,Mor4,Mor3,Mor2,Mor1} to provide a framework students can follow to develop \dfas. Such a framework is important because it helps students to organize their thoughts and to communicate their solution to others--a fundamental pillar of programming \cite{PPL}. We also build on the work done in denotational semantics \cite{Hoare}. In programming, this is important because programmers that identify and write down invariants are more likely to write correct code \cite{Scott}. Our experience suggests that the same is true when developing \dfas.

\subsection{Visualization Tools}
The best-known visualization tool for finite-state machines is \texttt{JFLAP} \cite{Rodger}. It is used by several textbooks to aid in teaching the material covered in an Automata Theory or a Compilers class \cite{Gopal,Linz,Mozgovoy,Garcia}. \texttt{JFLAP} has a \texttt{GUI}-based interface that allows the user to graphically render \dfas \ as graphs. Like \fsm, users can apply a \dfa \ to an input string and observe how it moves from one state to the next as the input is consumed. Likewise, \fsm \ users can also see the machine's execution. Unlike \fsm, however, \texttt{JFLAP} requires that users graphically render the \dfa. This means that users are burdened with the actual drawing of the machine, which distracts them from the design of the machine. Another contrasting feature is that \texttt{JFLAP} does not provide the ability to associate an invariant with a state and observe if the invariant holds during execution. In \fsm, the user can validate their design using state invariants before attempting to verify their machines. In other words, \fsm's visualization tool allows users to debug their machines before attempting to prove the correctness of the machine. In addition, machines in \texttt{JFLAP} cannot be rendered as code in a higher-level programming language. In \fsm, any machine created or edited using the visualization tool can be rendered as \fsm \ code.

Another visualization tool for \dfas \ is \texttt{jFAST} \cite{White}. Similar to \texttt{JFLAP}, \texttt{jFAST} has users render graph-based representations of \dfas. Users can execute the machines by providing an input word. Unlike \texttt{JFLAP} and \fsm, \texttt{jFAST} does not provide an option to see the series of transitions made during machine execution nor a trace of said transitions. In addition, like \texttt{JFLAP}, users cannot associate invariants with states to assist them to validate their machines nor can they render their machines as programs in a higher-level programming language.

The \texttt{FSA} simulator also allows the user to work and experiment with \dfas \ \cite{Grinder}. Users can see a stepwise simulation of the execution of the machine on a given word. \texttt{FSA}, like \texttt{JFLAP} and \texttt{jFAST}, requires the user to create a graph-based representation of \dfas \ and does not have a mechanism to associate an invariant with a state. A distinguishing feature of \texttt{FSA} is a compare button that tests if two \dfas \ decide the same language. This button builds on work done on closure properties of regular languages. Specifically, two \dfas, \texttt{M1} and \texttt{M2}, decide the same language, \texttt{$L_M$}, if the language \texttt{$(\bar{L_{M1}} \cap L_{M2}) \cup (L_{M1} \cap \bar{L_{M2}})$} is empty, where $\bar{L_{X}}$ is the complement of the language of machine \texttt{X}. Buttons such as this compare button make little sense in an \fsm \ because students can easily define a function to test for language equality given two \dfas:
\begin{alltt}
     ; dfa dfa --> boolean
     ; Purpose: To determine if the two given DFAs accept the same language
     (define (same-language? M1 M2)
       (let ((M (sm-union (sm-intersection (sm-complement M1) M2)
                          (sm-intersection (sm-complement M2) M1))))
         (empty? (sm-getfinals M))))
\end{alltt}
This function is, in essence, a direct translation of the equation above that uses the \fsm \ primitives \texttt{sm-union}, \texttt{sm-intersection}, \texttt{sm-complement}, and \texttt{sm-getfinals} which, respectively, take the union of two state machines, take the intersection of two state machines, take the complement of a state machine, and access the set of final states of a state machine. We argue that providing the ability to easily write such functions is more conducive to learning than having a button in a visualization tool to perform such tasks.

\section{\fsm \ Interface}
\label{fsm}

A \dfa \ is typically defined as a quintuple $M = (K, \Sigma, s, F, \delta)$, where $K$ is a finite set of states, $\Sigma$ is an alphabet, $s \in K$ is the staring state, $F \subseteq K$ is the set of final states, and $\delta$, the \emph{transition function}, is a function $K \ \Sigma \rightarrow K$. The input to $M$ is a word $w \in \Sigma^*$. The entire word is consumed, one symbol at a time, from left to right. The transition function specifies how the machine picks its next state after consuming $a \in \Sigma$ from the input word. After consuming the input word, the machine accepts if it has reached a final state. Otherwise, the machine rejects.

\fsm \ uses the following definitions:
\begin{itemize}
  \item State = symbol
  \item Alphabet = (listof symbol)
  \item Word = (listof symbol)
  \item Transition = (listof State symbol State)
  \item Transitions = (listof Transition)
  \item Result = \{accept, reject\}
  \item Configuration = (list Word State)
  \item Trace = (append (listof Configuration) (list Result))
\end{itemize}
In \fsm, a \dfa \ is an interface for the following services:
\begin{itemize}
  \item \textbf{make-dfa}: (listof State) Alphabet State (listof State) Transitions [\textquotesingle{no-dead}] $\rightarrow$ dfa\\
      Purpose: To construct a \dfa. The optional symbol indicates not to add a dead state
  \item \textbf{sm-getstates}: dfa $\rightarrow$ (listof State) \\
        Purpose: To access the given \dfa's set of states
  \item \textbf{sm-getalphabet}: dfa $\rightarrow$ Alphabet \\
        Purpose: To access the given \dfa's alphabet
  \item \textbf{sm-getstart}: dfa $\rightarrow$ State \\
        Purpose: To access the given \dfa's starting state
  \item \textbf{sm-getfinals}: dfa $\rightarrow$ (listof State) \\
        Purpose: To access the given \dfa's set of final states
  \item \textbf{sm-getrules}: dfa $\rightarrow$ Transitions \\
        Purpose: To access the given \dfa's transition function
  \item \textbf{sm-apply}: dfa Word $\rightarrow$ Result\\
        Purpose: To apply the given \dfa \ to the given word
  \item \textbf{sm-showtransitions}: dfa Word $\rightarrow$ Trace\\
        Purpose: To return the trace of applying the given \dfa \ to the given word
  \item \textbf{sm-test}: dfa [natnum] $\rightarrow$ (listof (list Word Result))\\
        Purpose: To return the results obtained from applying the given machine to the given optional number of randomly generated input words. If the number of tests is not provided, the default is 100 tests.
\end{itemize}
If the constructor, \texttt{make-dfa}, is given improper input, the tailor-made \fsm \ error-messaging system throws an error \cite{fsmerrors}. If the provided transitions are nondeterministic, for example, a error with a descriptive message containing the violating transitions is generated.

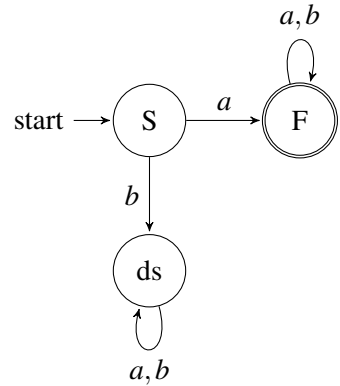
\begin{figure}[t]
\begin{center}
\begin{tikzpicture}[>=stealth',shorten >=1pt,auto,node distance=2 cm, scale = 1, transform shape]
    \node[initial,state] (S) {S};
    \node[state,accepting] (F) [right of=S] {F};
    \node[state] (ds) [below of=S] {ds};
    \path [->] (S) edge[align=center] node {$a$} (F);
    \path [->] (S) edge[align=center] node[left] {$b$} (ds);
    \path [->] (F) edge[loop above,align=center] node {$a,b$} (F);
    \path [->] (ds) edge[loop below,align=center] node {$a,b$} (ds);
\end{tikzpicture}
\end{center}
\caption{\dfa \ to decide the language of strings that start with an \texttt{a}.}
\label{a*}
\end{figure}

To make the use of \fsm \ concrete, consider deciding the following regular language:
\begin{alltt}
     L = \{w | w \(\in\) \((a, b)\sp{*}\) \(\wedge\) w starts with an a\}.
\end{alltt}
A graphical depiction of the automaton to decide \texttt{L} is displayed in Figure \ref{a*}. Students can implement this automaton in \fsm \ as follows:
\begin{alltt}
     (define a* (make-dfa \textquotesingle(S F)       ; the states
                          \textquotesingle(a b)       ; the input alphabet
                          \textquotesingle{S}           ; the starting state
                          \textquotesingle(F)         ; the set of final states
                          \textquotesingle((S a F)    ; the transition function
                            (F a F)
                            (F b F))))
\end{alltt}
For \dfas, students do not need to explicitly include a dead state (i.e., \texttt{ds}). \fsm \ automatically adds a dead state and the corresponding transitions into the dead state for any missing transitions (unless explicitly indicated by providing the optional \textquotesingle{\texttt{no-dead}} argument).

Now, instead of simply having an abstract drawing, students have an executable machine which they can apply to a word. For example, students may apply \texttt{a*} to \textquotesingle{\texttt{(b a a b)}}:
\begin{alltt}
     > (sm-apply a* \textquotesingle{(b a a b)})
     \textquotesingle{reject}
\end{alltt}
They can also see the different configurations of \texttt{a*} as it consumes an input word. For example, the following are the configurations of \texttt{a*} as it consumes \textquotesingle{\texttt{(a b a b a)}}:
\begin{alltt}
     > (sm-showtransitions a* \textquotesingle{(a b a b a)})
     \textquotesingle{(((a b a b a) S)
       ((b a b a) F)
       ((a b a) F)
       ((b a) F)
       ((a) F)
       (() F)
       accept)}
\end{alltt}

Finally, students can test a machine with randomly generated words to help validate their design. The following is the result of testing \texttt{a*} with 10 randomly generated words:
\begin{alltt}
     > (sm-test a* 10)
     \textquotesingle(((a b b b a) accept)
       ((a a a a b b a b) accept)
       ((a b) accept)
       ((b b a) reject)
       ((a a) accept)
       ((a a b) accept)
       ((b) reject)
       (() reject)
       ((b a a a b a a) reject)
       ((b b a a b b a b b) reject))
\end{alltt}

The above capabilities give students the opportunity to get immediate feedback and fine-tune their designs. The feedback comes from two sources: testing and a tailored-made error-messaging system. This is important because students commonly have buggy designs that lead to lower grades. When using \fsm, it is not uncommon for students to detect bugs by simply being able to test their machines as illustrated above. This ability to validate their machines gives students a degree of confidence in their designs. It cannot, however, give them full confidence as it falls short of verifying their machines. It is also not uncommon for students to discover that, for example, their transition function refers to non-existent states. This feedback is obtained from the \fsm \ error-messaging system. For a more thorough description of \fsm \ \cite{fsm} and of the \fsm \ error-messaging system~\cite{fsmerrors}, the reader is referred to the cited publications.

\section{Designing Deterministic Finite-State Automata}
\label{design}
Designing \dfas \ is no different than designing programs--a point commonly missed in Automata Theory textbooks. Students must analyze a problem, identify conditions that must hold as input is processed, associate these invariant conditions with a state, and translate their design into a function. To foster a design methodology, students are presented with the following design recipe to develop a deterministic finite state machine, \texttt{M}, for some language \texttt{L}:
\begin{quote}
\begin{enumerate}
  \item Name the \dfa \ and specify the input alphabet.
  \item Write unit tests.
  \item Determine the conditions that must be tracked as input is consumed and associate each condition with a state clearly identifying the start state and the final states.
  \item Formulate an invariant predicate for each state.
  \item Formulate the transition function.
  \item Implement the machine.
  \item Test the machine using unit tests and random testing.
  \item Develop a proof of partial correctness.
  \item Prove \texttt{L} = $\mathfrak{L}$(\texttt{M}) (the language decided by \texttt{M}).
\end{enumerate}
\end{quote}

In Step 1, students choose a name for their \dfa \ and specify the alphabet for input strings. The chosen name and alphabet are used to write unit tests in Step 2. The tests for Step 2 must include examples of words that ought to be accepted and ought to be rejected. If the set of accepted strings or the set of rejected strings is empty, then no unit tests are written for said set.

In Step 3, students begin to design their algorithm. Students must analyze the problem to determine what conditions must be tracked as the input is consumed to conclude that it is accepted or rejected. Each identified condition is associated with a state and represents an invariant for the state. In addition, students identify the starting state and the accepting/final states.

In Step 4, students formalize the meaning of each state based on the conditions identified in Step 3. For each state, a student must write a predicate for the invariant condition that a state represents. These predicates have one parameter for the consumed part of the input word \texttt{w}. To guide the development of the predicates, students are told that the invariant for a final state must imply that \texttt{w} $\in$ \texttt{L} and that the invariant of a non-final state must imply \texttt{w} $\notin$ \texttt{L}. In addition, students are told that for any rule, \texttt{(Q a P)}, \texttt{Q}'s invariant and the consumption of \texttt{a} must imply \texttt{P}'s invariant.

In Step 5, students formulate the transition function. Given that a \dfa \ is being designed, for each state, there must be one transition for each element of the alphabet. The choice of destination state for each rule is made based on what invariant condition holds after the next character in the input string is consumed in the current state.

In Step 6, students implement their machines in \fsm. This means providing the constructor for \dfas \, \texttt{make-dfa}, the set of states and the alphabet from Step 1, the starting and final states from Step 3, and the transition function from Step 5.

In Step 7, students test their machines. They must make sure that their unit tests are thorough and pass. In addition, students must use random testing to further validate their machines following current trends to generate tests instead of (just) writing tests \cite{Hughes}. Random testing is a rather important step as it gives students confidence in their design.

Finally, in Steps 8 and 9 students develop proofs. For step 8, students establish the partial correctness of the machine's transition function by proving that the state invariants from Step 4 always hold. They achieve this by induction on the length of, \texttt{w}, the input word. Step 9 requires proving that \texttt{L} = $\mathfrak{L}$(\texttt{M}). That is, students prove that \texttt{L} is the language decided by \texttt{M}. They achieve this by proving following equivalences:
  \begin{enumerate}
    \item \texttt{w} $\in$ $\mathfrak{L}$(\texttt{M}) $\Leftrightarrow$ \texttt{w} $\in$ \texttt{L}\\
    \item \texttt{w} $\notin$ $\mathfrak{L}$(\texttt{M}) $\Leftrightarrow$ \texttt{w} $\notin$ \texttt{L}
  \end{enumerate}
Students exploit the proof that the invariants always hold to complete this step.

\begin{figure}[t]
\begin{center}
\begin{tikzpicture}[>=stealth',shorten >=1pt,auto,node distance=2 cm, scale = 1, transform shape]
    \node[initial,state] (S) {S};
    \node[state,accepting] (F) [right of=S] {F};
    \node[state] (A) [right of=F] {A};
    \node[state] (ds) [below of=F] {ds};
    \path [->] (S) edge[align=center] node {$a$} (F);
    \path [->] (S) edge[align=center] node[below] {$b$} (ds);
    \path [->] (F) edge[align=center,bend left=10] node {$b$} (A);
    \path [->] (F) edge[loop above,align=center] node {$a$} (F);
    \path [->] (A) edge[loop above,align=center] node {$b$} (A);
    \path [->] (A) edge[align=center,bend left=10] node {$a$} (F);
    \path [->] (ds) edge[loop below,align=center] node {$a,b$} (ds);
\end{tikzpicture}
\end{center}
\caption{\dfa \ to decide the language of strings that start and end with an \texttt{a}.}
\label{a*a}
\end{figure}
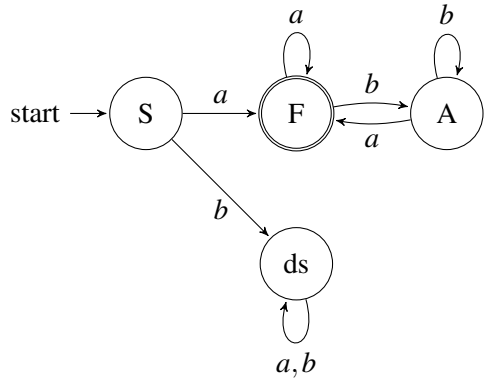

To make the steps of the design recipe concrete, consider designing a \dfa \ for:
\begin{alltt}
     L = \{w | w \(\in (a, b)\sp{*} \wedge \) w starts and ends with an a\}
\end{alltt}

To satisfy Step 1, we name the \dfa \ \texttt{a*a} and identify the alphabet as \texttt{\textquotesingle (a b)}. Step 2 results, for example, in the following unit tests using \texttt{Racket}'s testing engine:
\begin{alltt}
     (check-expect (sm-apply a*a \textquotesingle()) \textquotesingle{reject})
     (check-expect (sm-apply a*a \textquotesingle(b a a)) \textquotesingle{reject})
     (check-expect (sm-apply a*a \textquotesingle(a b a b)) \textquotesingle{reject})
     (check-expect (sm-apply a*a \textquotesingle(a)) \textquotesingle{accept})
     (check-expect (sm-apply a*a \textquotesingle(a b a a)) \textquotesingle{accept})
     (check-expect (sm-apply a*a \textquotesingle(a a a)) \textquotesingle{accept})
\end{alltt}
Observe that the suite of unit tests includes examples of both accepted and rejected strings. In addition, tests are included for both varieties of Word (i.e., empty and non-empty). Given that students have previously completed a year-long course using \textit{How to Design Programs} \cite{HtDP,HtDP2} at Seton Hall University, it is common for students to be thorough about covering the different varieties of the input. Nonetheless, there are always students that require reminding and ecouragement.

Step 3 reveals that the following four conditions must be tracked using 4 different states as the input word is consumed:
\begin{quote}
\begin{description}
  \item[S] The starting state represents that no input has been consumed.
  \item[F] This single final state represents that the input consumed starts and ends with an \texttt{a}.
  \item[A] Represents that the consumed input starts with an \texttt{a} and does not end with an \texttt{a}.
  \item[ds] The dead state represents that the input consumed does not start with an \texttt{a}.
\end{description}
\end{quote}

Step 4 yields four predicates. Each predicate represents the invariant condition that must hold for the consumed input. Concretely, the above invariant conditions may be, for example, implemented as displayed in Figure \ref{a*ainvs}. It is noteworthy that the instructor must remind students that they must write unit tests for invariant predicates.

\begin{figure}[t]
\begin{alltt}
     (define S-INV null?)

     (check-expect (S-INV \textquotesingle{(})) #t)  (check-expect (S-INV \textquotesingle{(}a b a)) #f)

     (define (F-INV consumed-input)
       (and (not (null? consumed-input))
            (eq? (first consumed-input) \textquotesingle{a})
            (eq? (last consumed-input) \textquotesingle{a})))

     (check-expect (F-INV \textquotesingle{(})) #f)   (check-expect (F-INV \textquotesingle{(}b b b)) #f)
     (check-expect (F-INV \textquotesingle{(}a)) #t)  (check-expect (F-INV \textquotesingle{(}a b a)) #t)

     (define (A-INV consumed-input)
       (and (not (null? consumed-input))
            (eq? (first consumed-input) \textquotesingle{a})
            (not (eq? (last consumed-input) \textquotesingle{a}))))

     (check-expect (A-INV \textquotesingle{(})) #f)  (check-expect (A-INV \textquotesingle{(}b b)) #f)
     (check-expect (A-INV \textquotesingle{(}a)) #f) (check-expect (A-INV \textquotesingle{(}a b b)) #t)

     (define (DS-INV consumed-input)
       (and (not (null? consumed-input))
            (not (eq? (first consumed-input) \textquotesingle{a}))))

     (check-expect (DS-INV \textquotesingle{(})) #f)  (check-expect (DS-INV \textquotesingle{(}a a b b)) #f)
     (check-expect (DS-INV \textquotesingle{(}b)) #t) (check-expect (DS-INV \textquotesingle{(}b a a a b)) #t)
\end{alltt}
\caption{Invariant Predicates for \texttt{a*a}.}
\label{a*ainvs}
\end{figure}

For Step 5, students must develop the transition function. This means, that for every state, there must be an outgoing transition for every element of the input alphabet. Students are taught to develop the transition function using the state invariants. Observe, for example, that if we assume \texttt{S-INV} holds, then consuming an \texttt{a} implies \texttt{F-INV} holds and consuming a \texttt{b} implies \texttt{DEAD-INV} holds. This tells us that \texttt{(S a F)} and \texttt{(S a ds)} may be used as machine transitions. Similar analysis for the other states leads to the complete set of transitions (using \fsm \ syntax): \texttt{\textquotesingle{((S a F) (F a F) (F b A) (A a F) (A b A))}}. Consistent with the use of \fsm, students are not required to list transitions involving the dead state, \texttt{ds}, but are free to do so. Students that are more visually oriented prefer to draw the \dfa, displayed in Figure \ref{a*a}, to illustrate the transition function. Both strategies for satisfying Step 5 are acceptable. Students that prefer to display the image of the automaton are free to not include transitions involving the dead state as it is understood that all missing transitions are to the dead state. Most students, however, choose to include the dead state in their diagrams. It is worth noting that, by using the state invariants to develop the transition function, we teach students that a proof, just like a program, is designed. This idea also fits in nicely with the development of constructive proofs.

\begin{figure}[t]
\begin{theorem}
For (\texttt{S}, \texttt{w}) $\Rightarrow^{*}_{\texttt{a*a}}$ (\texttt{H}, $\epsilon$) invariants \ hold, where \texttt{H} $\in$ \texttt{K}$_{\texttt{a*a}}$ and \texttt{w} \(\in\) \texttt{\(\Sigma\sp{*}\sb{a*a}\)}
\label{thm1}
\end{theorem}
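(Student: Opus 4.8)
The plan is to prove a slightly stronger statement than the one displayed and then specialize it. Rather than reasoning only about terminal configurations $(\texttt{H}, \epsilon)$, I would prove that \emph{every} reachable configuration respects its state's invariant: for all $u, v \in \Sigma^{*}_{\texttt{a*a}}$ with $w = uv$, if $(\texttt{S}, w) \Rightarrow^{*}_{\texttt{a*a}} (\texttt{P}, v)$, then \texttt{P}'s invariant holds of the consumed prefix $u$. Taking $v = \epsilon$ (so that $u = w$ and $\texttt{P} = \texttt{H}$) recovers the theorem. The induction is on the number of transition steps, equivalently on $|u|$, the length of the consumed input, exactly as Step 8 prescribes. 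In the base case of zero steps we have $u = \epsilon$ and current state \texttt{S}; since $\texttt{S-INV} = \texttt{null?}$ holds of $\epsilon$, the claim holds.

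For the inductive step, suppose that after $n$ steps the machine is in state \texttt{P} having consumed $u$, and that by the induction hypothesis \texttt{P}'s invariant holds of $u$. A single further step consumes a symbol $a \in \Sigma_{\texttt{a*a}}$ via some rule $(\texttt{P}\ a\ \texttt{Q})$, leaving the machine in \texttt{Q} with consumed input $ua$. It therefore suffices to establish the local implication foreshadowed in Step 4: for every rule $(\texttt{Q}\ a\ \texttt{P})$ of \texttt{a*a}, the assumption that \texttt{Q}'s invariant holds of $u$ together with the consumption of $a$ implies that \texttt{P}'s invariant holds of $ua$. Discharging this obligation for each rule closes the induction.

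The heart of the argument is thus a finite case analysis over the transition function, and it must include the implicit rules into the dead state \texttt{ds} that \fsm \ inserts automatically. For example, rule $(\texttt{S}\ \texttt{a}\ \texttt{F})$ forces $u = \epsilon$ by \texttt{S-INV}, so $ua$ consists of the single symbol \texttt{a}, a word that is nonempty and both begins and ends with \texttt{a}, which is precisely \texttt{F-INV}; rule $(\texttt{F}\ \texttt{b}\ \texttt{A})$ takes a word that begins and ends with \texttt{a}, appends a \texttt{b}, and yields a word still beginning with \texttt{a} but now ending with \texttt{b}, which is \texttt{A-INV}; and the dead-state rule $(\texttt{S}\ \texttt{b}\ \texttt{ds})$ extends $u = \epsilon$ by \texttt{b} to give a nonempty word not beginning with \texttt{a}, which is \texttt{DS-INV}. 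The rules $(\texttt{F}\ \texttt{a}\ \texttt{F})$, $(\texttt{A}\ \texttt{a}\ \texttt{F})$, $(\texttt{A}\ \texttt{b}\ \texttt{A})$, and the self-loops on \texttt{ds} are handled in the same way, each reducing to an elementary observation about the first and last symbols of $ua$.

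I expect the principal obstacle to be organizational rather than mathematical: guaranteeing that the case analysis is genuinely exhaustive. Because \fsm \ silently adds the dead state together with its transitions, the explicitly written rule set omits $(\texttt{S}\ \texttt{b}\ \texttt{ds})$ and the \texttt{ds} self-loops, so a natural but incomplete proof would verify only the five listed rules and thereby ignore the transitions that handle words not beginning with \texttt{a}. A secondary subtlety is the bookkeeping that connects the abstract step relation $\Rightarrow_{\texttt{a*a}}$ to the predicates, which are stated in terms of the consumed prefix $u$ rather than the full input $w$; keeping the decomposition $w = uv$ explicit throughout is what makes the reasoning about first and last symbols go through cleanly.
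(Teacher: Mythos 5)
Your proposal is correct and matches the paper's own argument: the paper likewise proves the theorem by induction on the length of the consumed input, verifying for each transition rule---explicitly including the dead-state rules $(\texttt{S}\ \texttt{b}\ \texttt{ds})$, $(\texttt{ds}\ \texttt{a}\ \texttt{ds})$, and $(\texttt{ds}\ \texttt{b}\ \texttt{ds})$---that the source state's invariant plus the consumed symbol implies the destination state's invariant. Your explicit strengthening to all reachable configurations and the bookkeeping with the decomposition $w = uv$ merely make precise what the paper's Figure~\ref{inv1} does implicitly via the consumed input \textsc{ci}.
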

\begin{proof}
\begin{description}
  \item[Base case]
    \[
    \begin{array}{rcl}
     |\texttt{w}| = 0   & \Rightarrow & \textsc{ci} = \epsilon \Rightarrow \texttt{ S-INV} \\
    \end{array}
    \]
  \item[Inductive Step]
    \[
    \begin{array}{l}
      \textrm{Assume invariants hold for } |\texttt{w}| = k. \textrm{     Show that invariants hold for } |\texttt{w}| = k+1.\\
      |\texttt{w}| = k+1 \Rightarrow \texttt{w} = \texttt{vi}, \textrm{ where } \texttt{v} \in \texttt{$\Sigma^*_{a*a}$} \textrm{ and } \texttt{i} \in \texttt{$\Sigma^*_{a*a}$} \\
      \textrm{For } \texttt{(S, vi)} \Rightarrow^{*}_{a*a}  \texttt{(M, i)} \textrm{ invariants hold by inductive hypothesis} \\
      \textrm{We must show that for } \texttt{(M i Q)} \texttt{ Q-INV} \textrm{ holds}.
    \end{array}
    \]
    \[
    \begin{array}{rll}
      \texttt{S-INV} & \Rightarrow & \textsc{ci} = \epsilon\\
      \textrm{Using } (\texttt{S } \texttt{a } \texttt{F}) & \Rightarrow & \textsc{ci} = \texttt{a} \Rightarrow \texttt{ F-INV}\\
      \textrm{Using } (\texttt{S } \texttt{b } \texttt{ds}) & \Rightarrow & \textsc{ci} = \texttt{b} \Rightarrow \texttt{ DS-INV}\\
      & &\\
      \texttt{F-INV} & \Rightarrow & \textsc{ci} = \texttt{a } \vee \texttt{ a(a|b)}^*\texttt{a}\\
      \textrm{Using } (\texttt{F } \texttt{a } \texttt{F}) & \Rightarrow & \textsc{ci} = \texttt{a(a|b)}^*\texttt{a } \Rightarrow \texttt{ F-INV}\\
      \textrm{Using } (\texttt{F } \texttt{b } \texttt{A}) & \Rightarrow & \textsc{ci} = \texttt{a(a|b)}^*\texttt{b } \Rightarrow \texttt{ A-INV}\\
      & &\\
      \texttt{A-INV} & \Rightarrow & \textsc{ci} = \texttt{a(a|b)}^*\texttt{b }\\
      \textrm{Using } (\texttt{A } \texttt{a } \texttt{F}) & \Rightarrow & \textsc{ci} = \texttt{a(a|b)}^*\texttt{a } \Rightarrow \texttt{ F-INV}\\
      \textrm{Using } (\texttt{A } \texttt{b } \texttt{A}) & \Rightarrow & \textsc{ci} = \texttt{a(a|b)}^*\texttt{b } \Rightarrow \texttt{ A-INV}\\
      & &\\
      \texttt{DS-INV} & \Rightarrow & \textsc{ci} = \texttt{b(a|b)}^*\\
      \textrm{Using } (\texttt{ds } \texttt{a } \texttt{ds}) & \Rightarrow & \textsc{ci} = \texttt{b(a|b)}^* \Rightarrow \texttt{ DS-INV}\\
      \textrm{Using } (\texttt{ds } \texttt{b } \texttt{ds}) & \Rightarrow & \textsc{ci} = \texttt{b(a|b)}^* \Rightarrow \texttt{ DS-INV}
    \end{array}
    \]
\end{description}
\end{proof}
\caption{Proof of partial correctness for \texttt{a*a}.}
\label{inv1}
\end{figure}

For Step 6, students implement their design as an \fsm \ program. After following the first 5 steps of the design recipe, students find code development mostly straightforward. The following is a sample result for this step:
\begin{alltt}
     (define a*a (make-dfa \textquotesingle(S F A) \textquotesingle(a b) \textquotesingle{S} \textquotesingle(F)
                           \textquotesingle((S a F)
                             (F a F)
                             (F b A)
                             (A a F)
                             (A b A))))
\end{alltt}

For Step 7, students must run their unit tests. If any tests fail, they must first make sure their tests are correctly written. In addition to running unit tests, students must use random testing (using \texttt{sm-test}). If any unit tests fail or any random test produces the wrong result, students need to redesign their machine.

For Step 8, students develop an inductive proof to establish that \texttt{a*a}'s invariants always hold when consuming an arbitrary \texttt{w}. The proof for \texttt{a*a} is displayed in Figure \ref{inv1}, where \texttt{K}$_{\texttt{a*a}}$ and \texttt{$\Sigma$}$_{\texttt{a*a}}$, respectively, denote the states and the alphabet of \texttt{a*a}. The symbol $\Rightarrow_{\texttt{a*a}}$ denotes one step of the machine and $\Rightarrow^{*}_{\texttt{a*a}}$ denotes zero or more steps of the machine. To denote the consumed input \textsc{ci} is used. If the students have correctly developed the invariants for Step 4 and have used them to design the transition function for Step 5, they find developing the argument for partial correctness straightforward. This step, however, is challenging for many students. Here is where they discover that mistakes in Steps 4 and 5 must be corrected.

Once students establish that the invariants always hold, they can proceed to establish that the machine decides the desired language to satisfy Step 9. For our example, it is important to observe that only \texttt{F-INV} holds when the input word starts and ends with an \texttt{a}. Using this observation and the proof of partial correctness, the students must prove that \texttt{L = $\mathfrak{L}$(a*a)} by proving two lemmas. The first is:\\ \\
\begin{alltt}
     w \(\in\) L \(\Leftrightarrow\) w \(\in\) \(\mathfrak{L}\)(a*a)
\end{alltt}
We outline the proof as follows:\\
\texttt{($\Rightarrow$) Assume: \texttt{w} $\in$ L \ \ \ \ \ Show: \texttt{w} $\in$ $\mathfrak{L}$(\texttt{a*a})}
\[\begin{array}{lcl}
  \texttt{w} \in \texttt{L} & \Rightarrow & \texttt{w} = \texttt{a } \vee \texttt{ w} = \texttt{a}y\texttt{a}, \ \text{where } y \in \texttt{$\Sigma$}^*_{\texttt{a*a}}\\
          & \Rightarrow & \texttt{a*a} \ \text{halts in \texttt{F}, given that only \texttt{F-INV} can hold after consuming \texttt{w}}\\
          & \Rightarrow & \texttt{w} \in \mathfrak{L}(\texttt{a*a})\\
\end{array}\]

\noindent \texttt{($\Leftarrow$) Assume: w $\in$ $\mathfrak{L}$(a*a) \ \ \ \ \ Show: w $\in$ L}
\begin{center}
$\begin{array}{lcl}
  \texttt{w} \in \mathfrak{L}(a*a) & \Rightarrow &  \texttt{a*a} \ \text{halts in} \ \texttt{F} \\
                                   & \Rightarrow & \texttt{w} \ \text{starts and ends with an} \ \texttt{a} \textrm{, given } \texttt{F-INV}  \text{ holds}\\
                                   & \Rightarrow & \texttt{w} \in \texttt{L}\\
\end{array}$
\end{center}

\noindent The second is:
\begin{alltt}
     w \(\notin\) L \(\Leftrightarrow\) w \(\notin\) \(\mathfrak{L}\)(a*a)
\end{alltt}
We outline the proof as follows: \\
\texttt{($\Rightarrow$) Assume: w $\notin$ L \ \ \ \ \ Show: w $\notin$ $\mathfrak{L}$(a*a)}
\begin{center}
$\begin{array}{lcl}
  w \notin L & \Rightarrow & \texttt{w} = \texttt{b}y \ \vee \texttt{ w} = y\texttt{b}, \ \text{where \ } y \in \texttt{$\Sigma$}^*_{\texttt{a*a}}\\
             & \Rightarrow & \texttt{a*a} \ \text{does not halt in} \ \texttt{F} \ \text{after consuming \texttt{w}, given} \ \texttt{F-INV} \ \text{does not hold}\\
             & \Rightarrow & \texttt{w} \notin \mathfrak{L}(\texttt{a*a})\\
\end{array}$
\end{center}

\noindent \texttt{($\Leftarrow$) Assume: w $\notin$ $\mathfrak{L}$(a*a) \ \ \ \ \ Show: w $\notin$ L}
\begin{center}
$\begin{array}{lcl}
  \texttt{w} \notin \mathfrak{L}(a*a) & \Rightarrow & \texttt{a*a} \ \text{does not halt in} \ \texttt{F}\\
                                      & \Rightarrow & \texttt{w} \ \text{does not start and end with an} \ \texttt{a}, \ \text{given} \ \texttt{F-INV} \ \text{does not hold}\\
                                      & \Rightarrow & \texttt{w} \notin \texttt{L}
\end{array}$
\end{center}

\section{The Visualization Tool}
\label{tool}

It is not surprising that a significant number students do not correctly design a machine at first and must engage in a process of iterative refinement to eliminate bugs. The steps of the design recipe that are most challenging for students are invariant predicate development and the proof of partial correctness. More specifically, students find it difficult to manually verify if a state's invariant holds. This task, admittedly, is trivial for toy examples, but with only a modicum increase in complexity, manually analyzing the consumed input by hand is difficult. The process is cumbersome and repetitive as verifying invariants must be done for every single transition. This type of repetitive chore is prone to errors as a student's attention span is likely to slip. The reader can imagine that manual invariant verification becomes harder as the input becomes longer.

Students suggested that they need a way to visualize the machine's execution. They pointed out that in a regular programming language, they could always, for example, populate their programs with print statements to examine values and to test conditions. This is a very reasonable point and became the root motivation to develop a visualization tool for deterministic finite-state automatons in \fsm. This section describes the interface the visualization tool presents to users. The first subsection describes the basic visualization components without the interlacing of state invariants. The second subsection addresses how the users associate invariants with states and how the visualization tools displays invariant information.

\subsection{Visualizing Deterministic Finite-State Automata}

\begin{figure}[t]
\begin{center}
\includegraphics[scale=0.3]{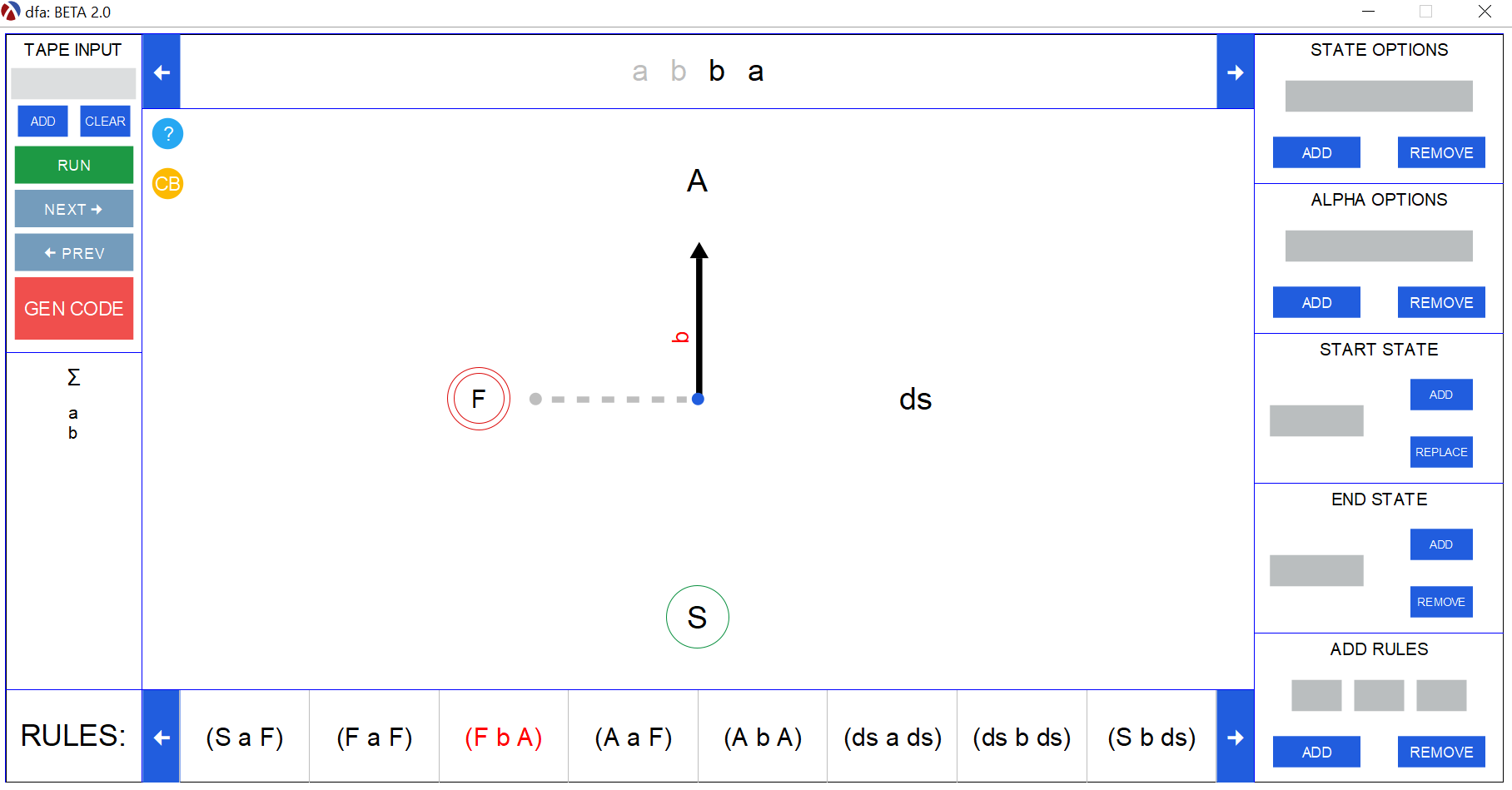}
\caption{The Basic Visualization Interface.} \label{tool1}
\end{center}
\end{figure}

The following principles guided the design of the visualization tool:
\begin{itemize}
  \item Integrate seamlessly into \fsm.
  \item Simulate and emphasize machine execution.
  \item Do not burden students with the creation of a graphical depiction.
  \item Allow for machine editing.
  \item Allow for invariant validation.
  \item Allow edited machines to be rendered as executable \fsm \ code.
\end{itemize}
In essence, the goal is to provide students with an interface that allows them to easily modify and interact with their machines at runtime.

\begin{figure}[t]
\begin{center}
\includegraphics[scale=0.3]{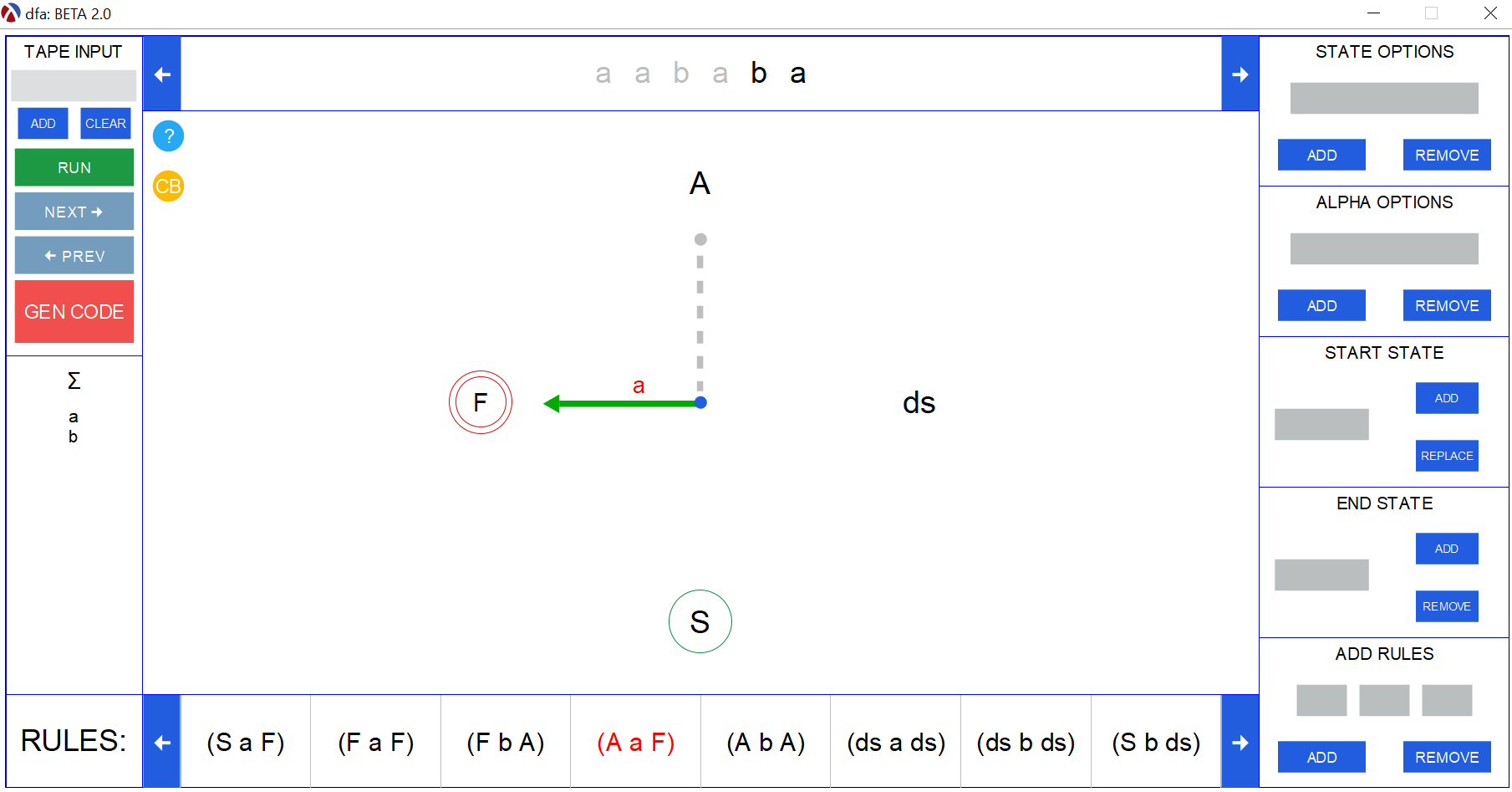}
\caption{Invariant Visualization During the Execution of \texttt{a*a}.} \label{tool2}
\end{center}
\end{figure}

Running the basic visualization tool is fully integrated into \fsm \ and is achieved by using one of two types of expressions:
\begin{description}
  \item[\texttt{(sm-visualize \textquotesingle{dfa})}] Launches the visualization tool and the user builds a \dfa \ from scratch.
  \item[\texttt{(sm-visualize $<$M$>$)}] Launches the visualization tool with machine \texttt{M} loaded.
\end{description}
Figure \ref{tool1} displays the basic interface of the \fsm \ visualization tool. The column on the right-hand side allows the user to add and remove states, alphabet elements, and rules. In addition, users can identify and replace the starting state and can add and remove final states.

The middle top displays the input tape. The consumed input is highlighted in a lighter color (e.g., \texttt{a b}). Below that, the machine's control representation is displayed. The states are organized in a circle, and a dark arrow indicates the current state. This arrow is labeled with the last input character consumed. The previous state is indicated with a light dashed line with a circle at its tail. The start state is enclosed in a single circle, and the final states are enclosed in two circles. In addition, there is a circled question mark that takes the user to the \fsm \ documentation page and a circled \texttt{CB} to switch the rendering to color blind mode. Below the control representation, the machine's transition rules are displayed. The rules are scrollable and the last rule used is highlighted in a lighter color.

The column on the left-hand side allows the user to clear or add to the input tape. The \texttt{RUN} button builds the machine in \fsm \ and applies it to the contents of the input tape. The \texttt{NEXT =>} and \texttt{<= PREV} buttons allow the user to sequentially move, respectively, forward and back through the sequence of configurations returned by \texttt{sm-showtransitions}. The \texttt{GEN CODE} button generates the \fsm \ code for the simulated machine and stores it in a file in the current directory. Every time this button is pressed, the definition of the current machine is added to this file with a date and time stamp. In this manner, students have access to a record of all the different versions of a machine for which they have chosen to generate code. Finally, the left-hand column also displays the machine's input alphabet.

The visualization tool, in this manner, allows students to focus on the algorithm they have developed based on state transitions and not on the graphical display of the machine. Students can, in essence, trace through a computation one step at a time to verify that the transitions make sense. In addition, students may also manually verify that the highlighted consumed input satisfies the invariant for the current state. Visually checking the validity of invariants, however, is prone to errors.

\subsection{Visualizing Invariants}
\label{inv}

Given the difficulty students face with visually checking invariants, the \fsm \ visualization tool permits users to associate an invariant with every state. The required syntax is:
\begin{alltt}
     (sm-visualize <\dfa> [(list <state> <predicate>)\(\sp{*}\)])
\end{alltt}
The optional list of state and predicate tuples associates a state with a predicate for the state's invariant. The predicate must take a single input representing the consumed input. The list of tuples does not need to be exhaustive. That is, not every state of the machine must be associated with a predicate. This allows students to build the invariant predicates in stages. Students can develop and test an invariant for a single state. Once they are satisfied that it is correct, they can then proceed with the invariant for another state.

\begin{figure}
\begin{alltt}
      (define a*a-buggy (make-dfa \textquotesingle(J K) \textquotesingle(a b)
                                  \textquotesingle{J} \textquotesingle(K)
                                  \textquotesingle((J a K) (K a K) (K b J))))

      (check-expect (sm-apply a*a-buggy \textquotesingle()) \textquotesingle{reject})
      (check-expect (sm-apply a*a-buggy \textquotesingle(b a a)) \textquotesingle{reject})
      (check-expect (sm-apply a*a-buggy \textquotesingle(a b a a)) \textquotesingle{accept})
      (check-expect (sm-apply a*a-buggy \textquotesingle(a a a)) \textquotesingle{accept})
      (check-expect (sm-apply a*a-buggy \textquotesingle(a b b a b a)) \textquotesingle{accept})

      (define (J-INV consumed-input)
        (or (empty? consumed-input) (not (eq? (last consumed-input) \textquotesingle{a}))))

      (check-expect (J-INV \textquotesingle()) #t)  (check-expect (J-INV \textquotesingle(a a)) #f)
      (check-expect (J-INV \textquotesingle(a b a b)) #t)

      (define (K-INV consumed-input)
        (and (eq? (first consumed-input) \textquotesingle{a}) (eq? (last consumed-input) \textquotesingle{a})))

      (check-expect (K-INV \textquotesingle()) #f) (check-expect (K-INV \textquotesingle(a)) #t)
      (check-expect (K-INV \textquotesingle(a b b)) #f) (check-expect (K-INV \textquotesingle(a a b a)) #t)

      (define (DS-INV consumed-input) (not (eq? (first consumed-input) \textquotesingle{a})))

      (check-expect (DS-INV \textquotesingle()) #f) (check-expect (DS-INV \textquotesingle(a a)) #f)
      (check-expect (DS-INV \textquotesingle(b a b)) #t)
\end{alltt}
\caption{A Buggy \dfa \ Implementation to Decide \texttt{a(a$|$b)$^*$a}.} \label{buggy1}
\end{figure}

Consider the visualization of \texttt{a*a} as follows:
\begin{alltt}
(sm-visualize
  a*a
  (list \textquotesingle{S} S-INV) (list \textquotesingle{F} F-INV) (list \textquotesingle{A} A-INV) (list DEAD DEAD-INV))
\end{alltt}
The invariants developed for \texttt{a*a} in Section \ref{design} are explicitly associated with their corresponding state. Figure \ref{tool2} captures a snapshot of the computation when \texttt{a*a} is applied to \textquotesingle{(a a b a b a)}. Observe that the arrow that points to the current state is green (a different tone in color blind mode). This indicates to the user that the invariant holds when \texttt{\textquotesingle{(a a b a)}} is consumed.

Each time the user clicks \texttt{NEXT =>} or \texttt{<= PREV}, the value of the invariant is recomputed. If the invariant does not hold, the arrow turns red. This indicates to the user that there is a flaw in their design or in their implementation. Students, therefore, now have their invariant predicates become part of their programs and no longer have to struggle through the cumbersome and error-prone manual validation of invariants as the machine executes. When the invariant fails, students now have a place to start their debugging process.

\section{Debugging Using the Visualization Tool}
\label{ex}

This section presents two examples of the use of the visualization tool in practice. The examples are submissions made by students. Both submissions have bugs, and this section shows how students use the visualization tool to detect and fix the bugs.

The first example is for a machine to decide the language that contains all strings that start and end with an \texttt{a} (i.e., the same language from Section \ref{design}). The second example is related to string matching, which is a central problem in computer systems and their applications \cite{Lewis}. In both cases, the visualization tool is used before proceeding with Step 8 of the design recipe for deterministic finite-state automata (i.e., before proving the machine's correctness).

\subsection{Debugging a Machine for \texttt{a(a$|$b)$^*$a}}
Consider the student submission for a machine to decide \texttt{a(a$|$b)$^*$a} displayed in Figure \ref{buggy1}. The student has designed a machine with three states and an invariant for each state. The state \texttt{J} represents that the input consumed is empty or the last character consumed is not an \texttt{a}. The only final state, \texttt{K}, represents that the first and last characters of the consumed input are \texttt{a}. Finally, the dead state represents that the consumed input is not empty and does not start with an \texttt{a}.

Upon testing the machine, all the unit tests for the invariants pass. The fifth unit test for \texttt{a*a-buggy}, however, fails. Any experienced instructor recognizes that the mistake made is typical for a novice: the transition from \texttt{J} to \texttt{ds} on a \texttt{b} is incorrect\footnote{Recall that for missing a transition in a \dfa \ \fsm \ adds a transition to the dead state.}. The student, however, is frustrated because some tests pass and (only) one fails. Random testing demonstrates to the student that the bug is common. For example, random testing yields the following results:
\begin{alltt}
     > (sm-test a*a-buggy 10)
     \textquotesingle{(((b a b a a b b a a) reject)
       ((b a a a a b) reject)
       ((a b b b b a a a a) reject)
       ((a a a a a) accept)
       ((a b b b a) reject)
       ((b a a) reject)
       ((a b a) accept)
       ((a b b a b) reject)
       ((b b b a a a a) reject)
       ((b b a a a) reject))}
\end{alltt}
The student is now convinced that the bug is unlikely to be obscure or to be a bug with \fsm, but is still unsure where the bug is.

\begin{figure}
\begin{center}
\includegraphics[scale=0.3]{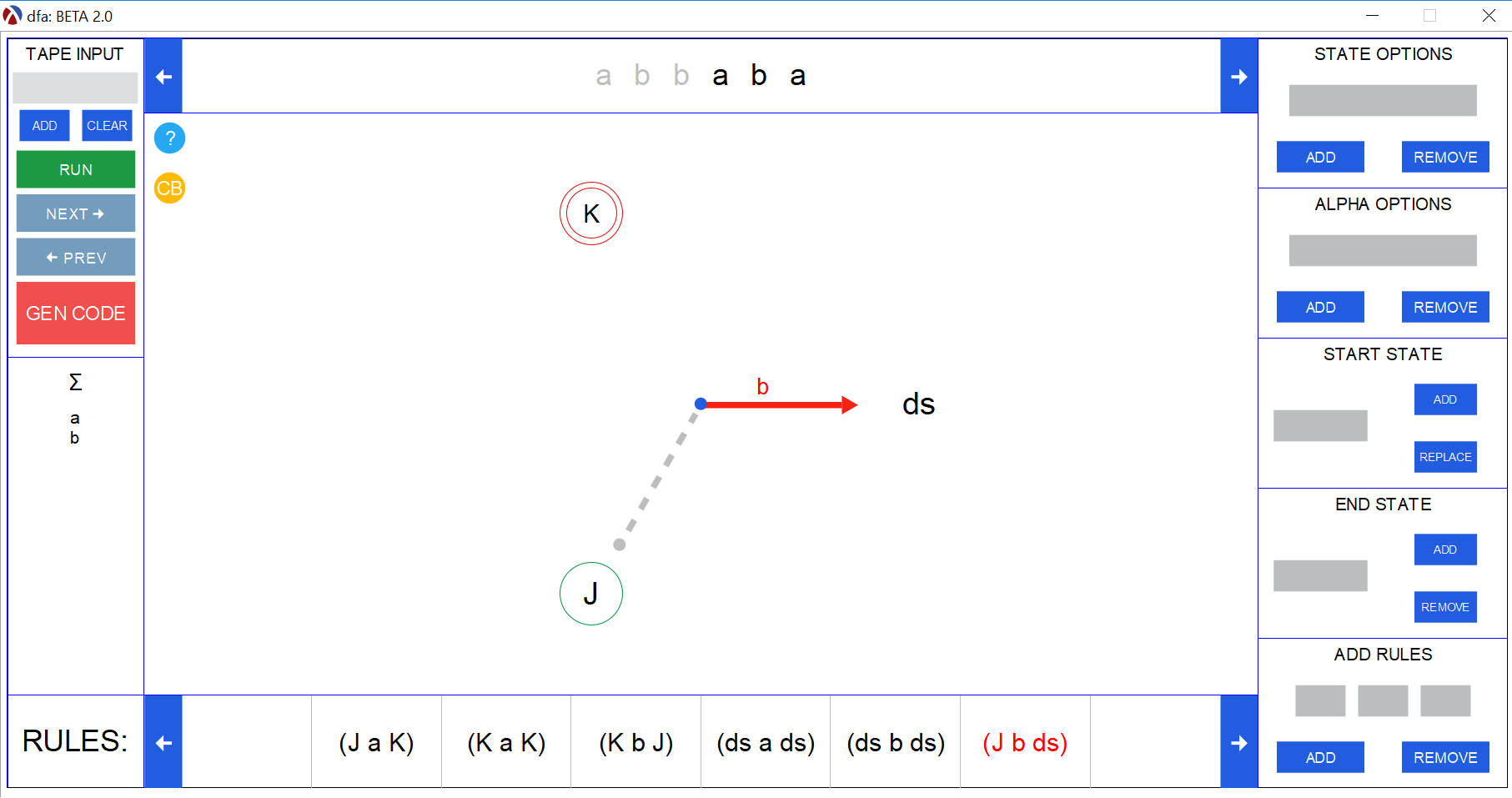}
\caption{Snapshot of the Execution of a Buggy Machine to Decide \texttt{a*a}.} \label{tool3}
\end{center}
\end{figure}

\begin{figure}
  \begin{alltt}
(define baba (make-dfa \textquotesingle{(A B C D F)} \textquotesingle{(a b)} \textquotesingle{A} \textquotesingle{(F)}
                       \textquotesingle{((A a A) (A b B) (B a C) (B b A) (F a F)
                         (C a A) (C b D) (D a F) (D b A) (F b F))}))
(check-expect (sm-apply baba \textquotesingle{(b a a a))} \textquotesingle{reject})
(check-expect (sm-apply baba \textquotesingle{(b a b b a b))} \textquotesingle{reject})
(check-expect (sm-apply baba \textquotesingle{(b a b a))} \textquotesingle{accept})
(check-expect (sm-apply baba \textquotesingle{(b b b a b a b a a))} \textquotesingle{accept})
(check-expect (sm-apply baba \textquotesingle{(a b b a b a b))} \textquotesingle{accept})

; A indicates that none of the pattern has been detected
(define (A-INV in)
  (cond [(empty? in) #t]
        [(= (length in) 1) (not (equal? (take-right in 1) \textquotesingle{(b)}))]
        [(= (length in) 2)
         (and (not (equal? (take-right in 1) \textquotesingle{(b)}))
              (not (equal? (take-right in 2) \textquotesingle{(b a)})))]
        [(= (length in) 3)
         (and (not (equal? (take-right in 1) \textquotesingle{(b)}))
              (not (equal? (take-right in 2) \textquotesingle{(b a)}))
              (not (equal? (take-right in 3) \textquotesingle{(b a b)})))]
        [else (and (not (equal? (take-right in 1) \textquotesingle{(b)}))
                   (not (equal? (take-right in 2) \textquotesingle{(b a)}))
                   (not (equal? (take-right in 3) \textquotesingle{(b a b)}))
                   (not (equal? (take-right in 4) \textquotesingle{(b a b a)})))]))
; B inidicates that b is detected
(define (B-INV in)
  (and (>= (length in) 1) (equal? (take-right in 1) \textquotesingle{(b)})))
; C indicates that ba is detected
(define (C-INV in)
  (and (>= (length in) 2) (equal? (take-right in 2) \textquotesingle{(b a)})))
; D indicates that bab is detected
(define (D-INV in)
  (and (>= (length in) 3) (equal? (take-right in 3) \textquotesingle{(b a b)})))
; F indicates that baba is detected
(define (F-INV in)
  (define (contains? patt w)
    (and (>= (length w) 4) (or (equal? (take w 4) patt)
         (contains? patt (rest w)))))
  (and (>= (length in) 4) (contains? \textquotesingle{(b a b a)} in)))
  \end{alltt}
\caption{A Proposed Solution to Decide if \texttt{baba} is a Substring of \texttt{w}.} \label{substr-buggy}
\end{figure}

At this point, the visualization tool is used to hone in on the bug. Figure \ref{tool3} displays a snapshot of the execution when the invariant fails for the first time using the word in the failed unit test. The machine has reached \texttt{ds} signifying that the consumed input does not start with an \texttt{a}. The red arrow in Figure \ref{tool3} clearly communicates that the invariant does not hold. At this point, the student realizes that they must refine the transition function. If the first input character read is \texttt{a}, then the machine should have no way to transition to the dead state. The student is asked if the invariants were used to design the transition function. The student confirms that they were not and understands that it is a mistake to skip design steps. Armed with this knowledge, the student goes back to designing and develops the machine presented in Section \ref{design}. Frustration over a bad grade and/or the lack of immediate feedback on their design and on their invariants is avoided. This exercise, in fact, has become a positive experience for both the student and the instructor.

\subsection{The String Matching Problem}
String matching algorithms may be designed using \dfas. There is a fixed string, \texttt{x}, that is called the \emph{pattern}. The goal is to decide if \texttt{x} is a substring of, \texttt{w}, a word. The algorithm is specifically designed around \texttt{x} and is not an algorithm that takes as input an arbitrary \texttt{x} and an arbitrary \texttt{w}.

As a class exercise, students are told that the input alphabet is \texttt{(a b)} and that \texttt{x} is \texttt{baba}. After attempting to use the design recipe, a student has developed the partial solution displayed in Figure \ref{substr-buggy}. In essence, the student's machine has a different state for each of the consecutive elements of the pattern detected. If the pattern is broken, the machine moves back to \texttt{A}, the starting state, indicating that none of the pattern has been detected. Although the student has formulated answers for the first six steps of the design recipe, it is a partial solution for two reasons. First, the student knows that the fifth unit test fails. Second, the student does not realize that the proposed state invariants are not strong enough.

\begin{figure}
\begin{center}
\includegraphics[scale=0.30]{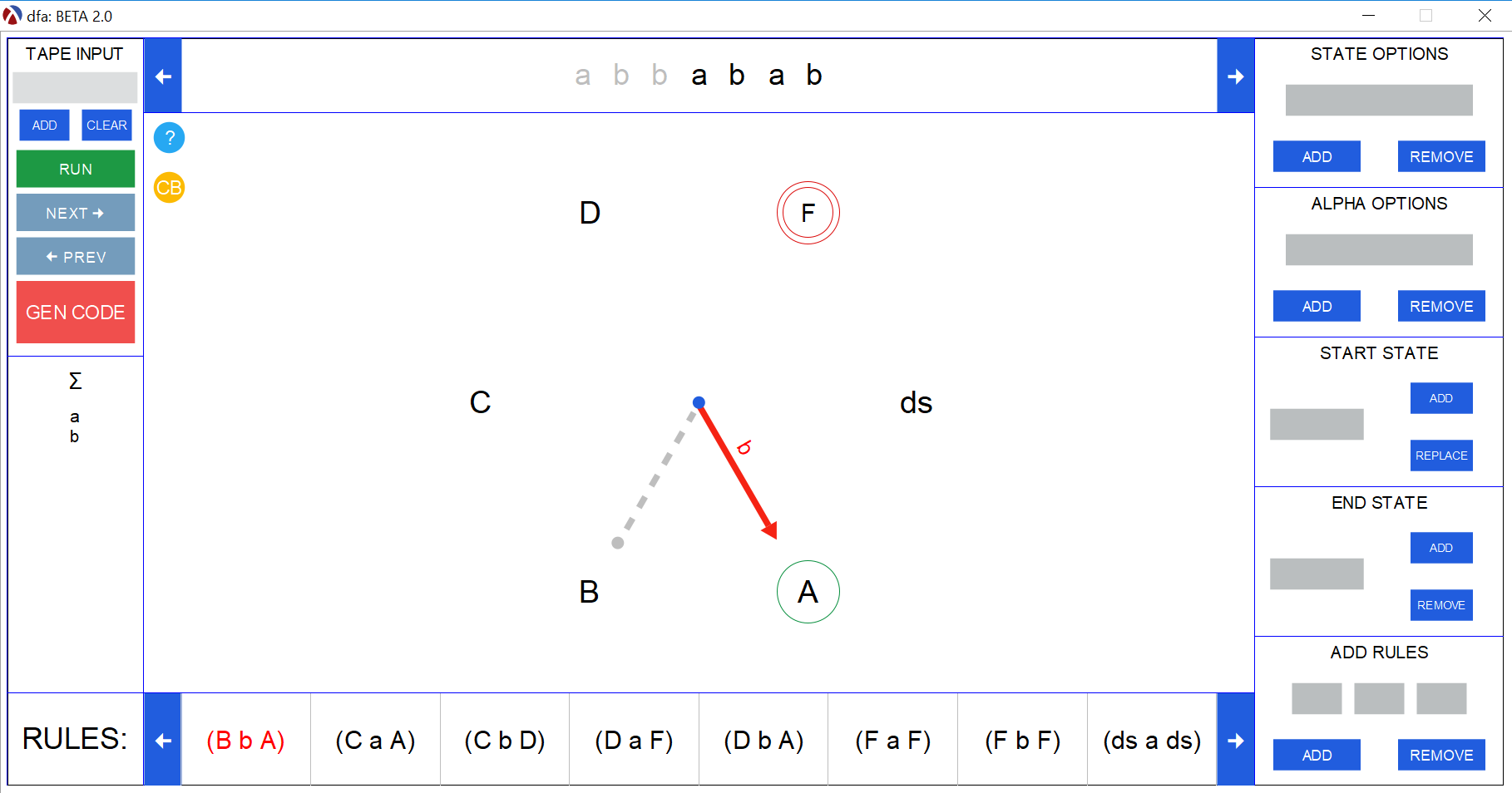}
\caption{Snapshot of the Execution of \texttt{baba} After Consuming the Second \texttt{b}.} \label{tool4}
\end{center}
\end{figure}

At this point, the student is asked to run the machine using \texttt{sm-visualize} and provide \texttt{abbabab}, the word in the failed test, as input to hone in on the bug. The snapshot of the visualization after consuming the second \texttt{b} is displayed in Figure \ref{tool4}. The student observes that the machine is in state \texttt{A}, indicating that none of the pattern has been detected. After some probing, the student realizes that reading a \texttt{b} in \texttt{B} does not mean that none of the pattern has been detected. The student becomes aware that the read \texttt{b} may be part of a different substring that matches the pattern and the machine ought to remain in state \texttt{B} and not move to state \texttt{A}.

\begin{figure}
\small{
\begin{alltt}
(define baba (make-dfa \textquotesingle{(A B C D F)} \textquotesingle{(a b)} \textquotesingle{A} \textquotesingle{(F)}
                       \textquotesingle{((A a A) (A b B) (B a C) (B b B) (F a F)
                         (C a A) (C b D) (D a F) (D b B) (F b F))}))
(define (contains? patt w)
  (and (>= (length w) (length patt))
       (or (equal? (take w (length patt)) patt) (contains? patt (rest w)))))
(define (A-INV input-consumed)
  (cond [(empty? input-consumed) #t]
        [(= (length input-consumed) 1)
         (not (equal? (take-right input-consumed 1) \textquotesingle{(b)}))]
        [(= (length input-consumed) 2)
         (and (not (equal? (take-right input-consumed 1) \textquotesingle{(b)}))
              (not (equal? (take-right input-consumed 2) \textquotesingle{(b a)})))]
        [(= (length input-consumed) 3)
         (and (not (equal? (take-right input-consumed 1) \textquotesingle{(b)}))
              (not (equal? (take-right input-consumed 2) \textquotesingle{(b a)}))
              (not (equal? (take-right input-consumed 3) \textquotesingle{(b a b)})))]
        [else (and (not (equal? (take-right input-consumed 1) \textquotesingle{(b)}))
                   (not (equal? (take-right input-consumed 2) \textquotesingle{(b a)}))
                   (not (equal? (take-right input-consumed 3) \textquotesingle{(b a b)}))
                   (not (contains? \textquotesingle{(b a b a)} input-consumed)))]))
(define (B-INV input-consumed)
  (and (>= (length input-consumed) 1)
       (equal? (take-right input-consumed 1) \textquotesingle{(b)})
       (cond [(= (length input-consumed) 2)
              (not (equal? (take-right input-consumed 2) \textquotesingle{(b a)}))]
             [(= (length input-consumed) 3)
              (and (not (equal? (take-right input-consumed 2) \textquotesingle{(b a)}))
                   (not (equal? (take-right input-consumed 3) \textquotesingle{(b a b)})))]
             [else (and (> (length input-consumed) 3)
                        (not (equal? (take-right input-consumed 2) \textquotesingle{(b a)}))
                        (not (equal? (take-right input-consumed 3) \textquotesingle{(b a b)}))
                        (not (contains? \textquotesingle{(b a b a)} input-consumed)))])))
(define (C-INV input-consumed)
  (and (>= (length input-consumed) 2) (equal? (take-right input-consumed 2) \textquotesingle{(b a)})
       (cond [(= (length input-consumed) 3)
              (not (equal? (take-right input-consumed 3) \textquotesingle{(b a b)}))]
             [else (and (> (length input-consumed) 3)
                        (not (equal? (take-right input-consumed 3) \textquotesingle{(b a b)}))
                        (not (contains? \textquotesingle{(b a b a)} input-consumed)))])))
(define (D-INV input-consumed)
  (and (>= (length input-consumed) 3)
       (equal? (take-right input-consumed 3) \textquotesingle{(b a b)})
       (if (> (length input-consumed) 3)
           (not (contains? \textquotesingle{(b a b a)} input-consumed))
           true)))
(define (F-INV input-consumed) (contains? \textquotesingle{(b a b a)} input-consumed))
\end{alltt} }
\caption{Changes to Proposed \dfa \ for Deciding if \texttt{baba} is a Substring of \texttt{w}.} \label{substr}
\end{figure}

After the above epiphany, the student is also advised to write unit tests for the invariant predicates. For example, the student is asked if state \texttt{D} only captures the idea that \texttt{bab} has been detected. Initially, the students responds affirmatively. The student is then asked if the consumed input contains the pattern upon the machine reaching state D. The answer is an emphatic no, and the student proceeds to explain why that is impossible. The argument is based on the entirety of the consumed input and not just the last three word elements. At this point, it is explained to the student that their argument is correct. However, the student is using a dynamic argument instead of a static argument. It is pointed out that if they know it is true, then it is invariant. Immediately, the student asks if it ought to be part of their invariant predicate for \texttt{D}. The student is told that it should be if it is useful in proving the partial correctness of the machine. This leads to a discussion about whether or not it is correct to reject a string that ends with the machine in state \texttt{D}. The student now realizes that \texttt{D} also represents that the pattern does not appear anywhere in the consumed input. The student is advised to think carefully about the invariants for the other states making sure they test the entire consumed input and to write unit tests for all predicates.

Figure \ref{substr} displays the student's refined solution\footnote{Unit tests for invariant predicates omitted in the interest of brevity.}. We can observe that the student has a deeper understanding of what each state represents. The invariants have been made stronger and the transition function of the machine reflects this understanding. For example, on a \texttt{b}, state \texttt{D} transitions to state \texttt{B} (i.e., \texttt{(D b B)} is part of the transition function). Furthermore, now \texttt{D} represents that \texttt{bab} has been detected, and the pattern does not appear in the consumed input. The effort made by the student has paid off because now the induction to prove the correctness of the machine is straightforward.

\subsection{Going Beyond the Design Recipe}
At this point, it is worthwhile to go beyond the design recipe with students. Students commonly observe that they found it useful to define \texttt{contains?} (displayed in Figure \ref{substr}), which decides if \texttt{patt} is present in \texttt{w}. Their natural question is why would they ever devise a \dfa \ if they can just write \texttt{contains?}. This is an excellent opportunity to reinforce lessons related to abstract running time. First, students are shown how easily \texttt{baba} is translated to standard \racket \ (non-\fsm) syntax:\\
\begin{alltt}
     (define (contains-baba? w)
       (define (dfa s w)
         (cond [(and (empty? w) (eq? s \textquotesingle{F})) #t]
               [(and (empty? w) (not (eq? s \textquotesingle{F}))) #f]
               [(eq? s \textquotesingle{A})
                 (if (eq? (first w) \textquotesingle{a})
                     (dfa \textquotesingle{A} (rest w))
                     (dfa \textquotesingle{B} (rest w)))]
               [(eq? s \textquotesingle{B})
                (if (eq? (first w) \textquotesingle{a})
                    (dfa \textquotesingle{C} (rest w))
                    (dfa \textquotesingle{B} (rest w)))]
               [(eq? s \textquotesingle{C})
                (if (eq? (first w) \textquotesingle{a})
                    (dfa \textquotesingle{A} (rest w))
                    (dfa \textquotesingle{D} (rest w)))]
               [(eq? s \textquotesingle{D})
                (if (eq? (first w) \textquotesingle{a})
                    (dfa \textquotesingle{F} (rest w))
                    (dfa \textquotesingle{B} (rest w)))]
               [else (dfa \textquotesingle{F} (rest w))]))
       (dfa \textquotesingle{A} w))
\end{alltt}
Now, students are asked what is the running time of \texttt{contains?} and of \texttt{contains-baba?}. After some discussion, students conclude that \texttt{contains?} is \texttt{O($n^2$)}, where \texttt{n} is the length of \texttt{w}. With some guidance, students realize that \texttt{contains-baba?} is \texttt{O($n$)}. Invariably, the question arises if it really matters (given that designing the \dfa \ is ``so much" work). An effective way to answer this is to run an experiment on a ``long" input that does not contain \texttt{baba} such as:
\begin{alltt}
     (define W (build-list 100000 (lambda (n) \textquotesingle{a})))

     (time (contains-baba? W))
     (time (contains? \textquotesingle{(b a b a)} W))
\end{alltt}
This yields the following results (after a significant amount of time waiting for \texttt{contains?} to terminate):
\begin{alltt}
     cpu time: 15 real time: 12 gc time: 0
     cpu time: 11406 real time: 11517 gc time: 0
\end{alltt}
The students usually chuckle and admit that it is worth the effort to design the \dfa. More importantly, they develop a true appreciation for \dfas \ and for exploring different design strategies beyond those found in a typical undergraduate algorithms textbook. Finally, a discussion is had about having \texttt{contains?} first create a \dfa \ for the pattern it receives as input and then perform the search, culminating in a discussion of the Knuth-Morris-Pratt algorithm \cite{Baase,KMP}.

\section{Concluding Remarks and Future Work}
\label{concl}
This article presents a novel design recipe for \dfas \ and \fsm's visualization tool to support the design process. The tool does not burden the user with having to render a \dfa \ as a graph. Instead, users define machines as executable code. The visualization tool allows the user to see the machine's state transitions during execution upon receiving input. Users may edit their machines and render their edited machines as executable code in \fsm. The most novel feature is that each state may be associated with an invariant predicate. The visualization tool indicates if the invariant holds after each transition. Both the code generation feature and the invariant validation feature are unique among software to visualize state-based machines. Providing a mechanism to visualize if a proposed invariant holds is useful for machine debugging and for formal machine verification as required by the presented design recipe.

Future work includes extending the visualization tool to encompass nondeterministic finite-state machines, pushdown automatas, and Turing machines. These extensions will follow the same interface presented to users for \dfas, which allows for the association of an invariant with each state and for the rendering of edited machines as executable \fsm \ code. In addition, we envision the visualization tool offering the user the option to display a \dfa \ using the current control representation or using a graph-based representation. Another interesting line of research that arises from this work is extending \fsm's testing facilities to include invariant testing. In addition to random testing, we envision \fsm \ generating a set of words that traverse every transition in a given machine and that reports where, if anywhere, an invariant fails. This ought to make \dfa \ testing more robust and facilitate the detection of bugs.

\section{Acknowledgements}
The authors thank the participants of \texttt{TFPIE 2020} for their insightful comments and feedback. In particular, we thank John Hughes for our discussions on testing after our presentation. It is thanks to these discussions that automatic exhaustive invariant testing is part of our future work. In addition, the authors thank Matthias Felleisen for his encouragement and his feedback on the work presented in this article. Both have proven useful in the development of the visualization tool. Finally, the authors also thank Seton Hall University for their support that has made this work possible.

\bibliographystyle{eptcs}
\bibliography{fsmvisualization}

\begin{thebibliography}{10}
\providecommand{\bibitemdeclare}[2]{}
\providecommand{\surnamestart}{}
\providecommand{\surnameend}{}
\providecommand{\urlprefix}{Available at }
\providecommand{\url}[1]{\texttt{#1}}
\providecommand{\href}[2]{\texttt{#2}}
\providecommand{\urlalt}[2]{\href{#1}{#2}}
\providecommand{\doi}[1]{doi:\urlalt{http://dx.doi.org/#1}{#1}}
\providecommand{\bibinfo}[2]{#2}

\bibitemdeclare{book}{Baase}
\bibitem{Baase}
\bibinfo{author}{Sara \surnamestart Baase\surnameend} (\bibinfo{year}{1988}):
  \emph{\bibinfo{title}{{Computer Algorithms: Introduction to Design and
  Analysis}}}, \bibinfo{edition}{2nd} edition.
\newblock \bibinfo{publisher}{Addison-Wesley Longman Publishing Co., Inc.},
  \bibinfo{address}{Boston, MA, USA}.

\bibitemdeclare{book}{HtDP}
\bibitem{HtDP}
\bibinfo{author}{Matthias \surnamestart Felleisen\surnameend},
  \bibinfo{author}{Robert~Bruce \surnamestart Findler\surnameend},
  \bibinfo{author}{Matthew \surnamestart Flatt\surnameend} \&
  \bibinfo{author}{Shriram \surnamestart Krishnamurthi\surnameend}
  (\bibinfo{year}{2001}): \emph{\bibinfo{title}{{How to Design Programs: An
  Introduction to Programming and Computing}}}, \bibinfo{edition}{{First}}
  edition.
\newblock \bibinfo{publisher}{MIT Press}, \bibinfo{address}{Cambridge, MA,
  USA}.

\bibitemdeclare{book}{HtDP2}
\bibitem{HtDP2}
\bibinfo{author}{Matthias \surnamestart Felleisen\surnameend},
  \bibinfo{author}{Robert~Bruce \surnamestart Findler\surnameend},
  \bibinfo{author}{Matthew \surnamestart Flatt\surnameend} \&
  \bibinfo{author}{Shriram \surnamestart Krishnamurthi\surnameend}
  (\bibinfo{year}{2018}): \emph{\bibinfo{title}{{How to Design Programs: An
  Introduction to Programming and Computing}}}, \bibinfo{edition}{{Second}}
  edition.
\newblock \bibinfo{publisher}{MIT Press}, \bibinfo{address}{Cambridge, MA,
  USA}.

\bibitemdeclare{article}{PPL}
\bibitem{PPL}
\bibinfo{author}{Matthias \surnamestart Felleisen\surnameend},
  \bibinfo{author}{Robert~Bruce \surnamestart Findler\surnameend},
  \bibinfo{author}{Matthew \surnamestart Flatt\surnameend},
  \bibinfo{author}{Shriram \surnamestart Krishnamurthi\surnameend},
  \bibinfo{author}{Eli \surnamestart Barsilay\surnameend}, \bibinfo{author}{Jay
  \surnamestart McCarthy\surnameend} \& \bibinfo{author}{Sam \surnamestart
  Tobin-Hochstadt\surnameend} (\bibinfo{year}{2018}): \emph{\bibinfo{title}{{A
  Programmable Programming Language}}}.
\newblock {\sl \bibinfo{journal}{Commun. ACM}}
  \bibinfo{volume}{61}(\bibinfo{number}{13}), pp. \bibinfo{pages}{62--71},
  \doi{10.1145/3127223}.

\bibitemdeclare{book}{Gopal}
\bibitem{Gopal}
\bibinfo{author}{Ganesh \surnamestart Gopalakrishnan\surnameend}
  (\bibinfo{year}{2006}): \emph{\bibinfo{title}{{Computation Engineering}}},
  \bibinfo{edition}{1st} edition.
\newblock \bibinfo{publisher}{Springer-Verlag US}, \bibinfo{address}{USA}.

\bibitemdeclare{article}{Grinder}
\bibitem{Grinder}
\bibinfo{author}{Michael~T. \surnamestart Grinder\surnameend}
  (\bibinfo{year}{2003}): \emph{\bibinfo{title}{A Preliminary Empirical
  Evaluation of the Effectiveness of a Finite State Automaton Animator}}.
\newblock {\sl \bibinfo{journal}{SIGCSE Bull.}}
  \bibinfo{volume}{35}(\bibinfo{number}{1}), pp. \bibinfo{pages}{157--161},
  \doi{10.1145/792548.611958}.

\bibitemdeclare{article}{Hoare}
\bibitem{Hoare}
\bibinfo{author}{C.~A.~R. \surnamestart Hoare\surnameend}
  (\bibinfo{year}{1969}): \emph{\bibinfo{title}{An Axiomatic Basis for Computer
  Programming}}.
\newblock {\sl \bibinfo{journal}{Commun. ACM}}
  \bibinfo{volume}{12}(\bibinfo{number}{10}), pp. \bibinfo{pages}{576--580},
  \doi{10.1145/363235.363259}.

\bibitemdeclare{inproceedings}{Hughes}
\bibitem{Hughes}
\bibinfo{author}{John \surnamestart Hughes\surnameend} (\bibinfo{year}{2016}):
  \emph{\bibinfo{title}{{Experiences with QuickCheck: Testing the Hard Stuff
  and Staying Sane}}}.
\newblock In \bibinfo{editor}{Sam \surnamestart Lindley\surnameend},
  \bibinfo{editor}{Conor \surnamestart McBride\surnameend},
  \bibinfo{editor}{Philip~W. \surnamestart Trinder\surnameend} \&
  \bibinfo{editor}{Donald \surnamestart Sannella\surnameend}, editors: {\sl
  \bibinfo{booktitle}{A List of Successes That Can Change the World - Essays
  Dedicated to Philip Wadler on the Occasion of His 60th Birthday}}, {\sl
  \bibinfo{series}{Lecture Notes in Computer Science}} \bibinfo{volume}{9600},
  \bibinfo{publisher}{Springer}, pp. \bibinfo{pages}{169--186},
  \doi{10.1007/978-3-319-30936-1\_9}.

\bibitemdeclare{article}{KMP}
\bibitem{KMP}
\bibinfo{author}{Donald~E. \surnamestart Knuth\surnameend},
  \bibinfo{author}{James~H. \surnamestart Morris\surnameend} \&
  \bibinfo{author}{Vaughan~R. \surnamestart Pratt\surnameend}
  (\bibinfo{year}{1977}): \emph{\bibinfo{title}{{Fast Pattern Matching in
  Strings}}}.
\newblock {\sl \bibinfo{journal}{SIAM Journal on Computing}}
  \bibinfo{volume}{6}(\bibinfo{number}{2}), pp. \bibinfo{pages}{323--350},
  \doi{10.1137/0206024}.

\bibitemdeclare{book}{Lewis}
\bibitem{Lewis}
\bibinfo{author}{Harry~R. \surnamestart Lewis\surnameend} \&
  \bibinfo{author}{Christos~H. \surnamestart Papadimitriou\surnameend}
  (\bibinfo{year}{1997}): \emph{\bibinfo{title}{Elements of the Theory of
  Computation}}, \bibinfo{edition}{2nd} edition.
\newblock \bibinfo{publisher}{Prentice Hall PTR}, \bibinfo{address}{Upper
  Saddle River, NJ, USA}, \doi{10.1145/300307.1040360}.

\bibitemdeclare{book}{Linz}
\bibitem{Linz}
\bibinfo{author}{Peter \surnamestart Linz\surnameend} (\bibinfo{year}{2011}):
  \emph{\bibinfo{title}{{An Introduction to Formal Languages and Automata}}},
  \bibinfo{edition}{5th} edition.
\newblock \bibinfo{publisher}{Jones and Bartlett Publishers, Inc.},
  \bibinfo{address}{USA}.

\bibitemdeclare{inproceedings}{Mor7}
\bibitem{Mor7}
\bibinfo{author}{Marco~T. \surnamestart Moraz{\'{a}}n\surnameend}
  (\bibinfo{year}{2010}): \emph{\bibinfo{title}{{Functional Video Games in the
  CS1 Classroom}}}.
\newblock In \bibinfo{editor}{Rex~L. \surnamestart Page\surnameend},
  \bibinfo{editor}{Zolt{\'{a}}n \surnamestart Horv{\'{a}}th\surnameend} \&
  \bibinfo{editor}{Vikt{\'{o}}ria \surnamestart Zs{\'{o}}k\surnameend},
  editors: {\sl \bibinfo{booktitle}{Trends in Functional Programming - 11th
  International Symposium, {TFP} 2010, Norman, OK, USA, May 17-19, 2010.
  Revised Selected Papers}}, {\sl \bibinfo{series}{Lecture Notes in Computer
  Science}} \bibinfo{volume}{6546}, \bibinfo{publisher}{Springer}, pp.
  \bibinfo{pages}{166--183}, \doi{10.1007/978-3-642-22941-1\_11}.

\bibitemdeclare{inproceedings}{Mor6}
\bibitem{Mor6}
\bibinfo{author}{Marco~T. \surnamestart Moraz{\'{a}}n\surnameend}
  (\bibinfo{year}{2011}): \emph{\bibinfo{title}{{Functional Video Games in CS1
  II}}}.
\newblock In \bibinfo{editor}{Ricardo \surnamestart Pe{\~{n}}a\surnameend} \&
  \bibinfo{editor}{Rex~L. \surnamestart Page\surnameend}, editors: {\sl
  \bibinfo{booktitle}{Trends in Functional Programming, 12th International
  Symposium, {TFP} 2011, Madrid, Spain, May 16-18, 2011, Revised Selected
  Papers}}, {\sl \bibinfo{series}{Lecture Notes in Computer Science}}
  \bibinfo{volume}{7193}, \bibinfo{publisher}{Springer}, pp.
  \bibinfo{pages}{146--162}, \doi{10.1007/978-3-642-32037-8\_10}.

\bibitemdeclare{inproceedings}{Mor5}
\bibitem{Mor5}
\bibinfo{author}{Marco~T. \surnamestart Moraz{\'{a}}n\surnameend}
  (\bibinfo{year}{2013}): \emph{\bibinfo{title}{{Functional Video Games in
  {CS1} {III} - Distributed Programming for Beginners}}}.
\newblock In \bibinfo{editor}{Jay \surnamestart McCarthy\surnameend}, editor:
  {\sl \bibinfo{booktitle}{Trends in Functional Programming - 14th
  International Symposium, {TFP} 2013, Provo, UT, USA, May 14-16, 2013, Revised
  Selected Papers}}, {\sl \bibinfo{series}{Lecture Notes in Computer Science}}
  \bibinfo{volume}{8322}, \bibinfo{publisher}{Springer}, pp.
  \bibinfo{pages}{149--167}, \doi{10.1007/978-3-642-45340-3\_10}.

\bibitemdeclare{article}{Mor4}
\bibitem{Mor4}
\bibinfo{author}{Marco~T. \surnamestart Moraz{\'{a}}n\surnameend}
  (\bibinfo{year}{2015}): \emph{\bibinfo{title}{{Generative and accumulative
  recursion made fun for beginners}}}.
\newblock {\sl \bibinfo{journal}{Comput. Lang. Syst. Struct.}}
  \bibinfo{volume}{44}, pp. \bibinfo{pages}{181--197},
  \doi{10.1016/j.cl.2015.08.001}.

\bibitemdeclare{inproceedings}{Mor3}
\bibitem{Mor3}
\bibinfo{author}{Marco~T. \surnamestart Moraz{\'{a}}n\surnameend}
  (\bibinfo{year}{2017}): \emph{\bibinfo{title}{{Vector Programming Using
  Structural Recursion}}}.
\newblock In \bibinfo{editor}{Simon \surnamestart Thompson\surnameend}, editor:
  {\sl \bibinfo{booktitle}{Proceedings Sixth Workshop on Trends in Functional
  Programming in Education, TFPIE@TFP 2017, Canterbury, Kent UK, 22 June
  2017}}, {\sl \bibinfo{series}{{EPTCS}}} \bibinfo{volume}{270}, pp.
  \bibinfo{pages}{1--17}, \doi{10.4204/EPTCS.270.1}.

\bibitemdeclare{article}{Mor1}
\bibitem{Mor1}
\bibinfo{author}{Marco~T. \surnamestart Moraz{\'{a}}n\surnameend}
  (\bibinfo{year}{2018}): \emph{\bibinfo{title}{{Infusing an HtDP-based {CS1}
  with Distributed Programming Using Functional Video Games}}}.
\newblock {\sl \bibinfo{journal}{J. Funct. Program.}} \bibinfo{volume}{28},
  p.~\bibinfo{pages}{e5}, \doi{10.1017/S0956796818000059}.

\bibitemdeclare{inproceedings}{Mor2}
\bibitem{Mor2}
\bibinfo{author}{Marco~T. \surnamestart Moraz{\'{a}}n\surnameend}
  (\bibinfo{year}{2018}): \emph{\bibinfo{title}{{Vector Programming Using
  Generative Recursion}}}.
\newblock In \bibinfo{editor}{Peter \surnamestart Achten\surnameend} \&
  \bibinfo{editor}{Heather \surnamestart Miller\surnameend}, editors: {\sl
  \bibinfo{booktitle}{Proceedings Seventh International Workshop on Trends in
  Functional Programming in Education, TFPIE@TFP 2018, Chalmers University,
  Gothenburg, Sweden, 14th June 2018}}, {\sl \bibinfo{series}{{EPTCS}}}
  \bibinfo{volume}{295}, pp. \bibinfo{pages}{35--51},
  \doi{10.4204/EPTCS.295.3}.

\bibitemdeclare{inproceedings}{fsm}
\bibitem{fsm}
\bibinfo{author}{Marco~T. \surnamestart Moraz{\'{a}n}\surnameend} \&
  \bibinfo{author}{Rosario \surnamestart Antunez\surnameend}
  (\bibinfo{year}{2014}): \emph{\bibinfo{title}{Functional Automata - Formal
  Languages for Computer Science Students}}.
\newblock In \bibinfo{editor}{James \surnamestart Caldwell\surnameend},
  \bibinfo{editor}{Philip K.~F. \surnamestart H{\"{o}}lzenspies\surnameend} \&
  \bibinfo{editor}{Peter \surnamestart Achten\surnameend}, editors: {\sl
  \bibinfo{booktitle}{Proceedings 3$^{rd}$ International Workshop on Trends in
  Functional Programming in Education}}, {\sl \bibinfo{series}{{EPTCS}}}
  \bibinfo{volume}{170}, pp. \bibinfo{pages}{19--32},
  \doi{10.4204/EPTCS.170.2}.

\bibitemdeclare{article}{fsmerrors}
\bibitem{fsmerrors}
\bibinfo{author}{Marco~T. \surnamestart Moraz{\'{a}}n\surnameend} \&
  \bibinfo{author}{Josephine~A. \surnamestart {Des Rosiers}\surnameend}
  (\bibinfo{year}{2019}): \emph{\bibinfo{title}{{FSM} Error Messages}}.
\newblock {\sl \bibinfo{journal}{EPTCS}} \bibinfo{volume}{295}, pp.
  \bibinfo{pages}{1--16}, \doi{10.4204/EPTCS.295.1}.
\newblock \urlprefix\url{https://arxiv.org/abs/1906.11421v1}.

\bibitemdeclare{book}{Mozgovoy}
\bibitem{Mozgovoy}
\bibinfo{author}{Maxim \surnamestart Mozgovoy\surnameend}
  (\bibinfo{year}{2009}): \emph{\bibinfo{title}{{Algorithms, Languages,
  Automata, \& Compilers: A Practical Approach}}}, \bibinfo{edition}{1st}
  edition.
\newblock \bibinfo{publisher}{Jones and Bartlett Publishers, Inc.},
  \bibinfo{address}{USA}.

\bibitemdeclare{book}{Rich}
\bibitem{Rich}
\bibinfo{author}{Elaine \surnamestart Rich\surnameend} (\bibinfo{year}{2019}):
  \emph{\bibinfo{title}{Automata, Computability and Complexity: Theory and
  Applications}}.
\newblock \bibinfo{publisher}{Pearson Prentice Hall}.

\bibitemdeclare{book}{Rodger}
\bibitem{Rodger}
\bibinfo{author}{Susan~H. \surnamestart Rodger\surnameend}
  (\bibinfo{year}{2006}): \emph{\bibinfo{title}{{JFLAP: An Interactive Formal
  Languages and Automata Package}}}.
\newblock \bibinfo{publisher}{Jones and Bartlett Publishers, Inc.},
  \bibinfo{address}{USA}.

\bibitemdeclare{book}{Garcia}
\bibitem{Garcia}
\bibinfo{author}{Tom{\'{a}}s~Garc{\'{i}}a \surnamestart Saiz\surnameend} \&
  \bibinfo{author}{Elena~Gaudioso \surnamestart V{\'{a}}zquez\surnameend}
  (\bibinfo{year}{2010}): \emph{\bibinfo{title}{{Aut{\'{o}}matas,
  Gram{\'{a}}ticas, y Lenguages Formales: Problemas Resueltos}}},
  \bibinfo{edition}{1st} edition.
\newblock \bibinfo{publisher}{Sanz Y Torres, S. L.},
  \bibinfo{address}{Espa{\~{n}}a}.

\bibitemdeclare{book}{Scott}
\bibitem{Scott}
\bibinfo{author}{Michael~L. \surnamestart Scott\surnameend}
  (\bibinfo{year}{2000}): \emph{\bibinfo{title}{Programming Language
  Pragmatics}}, \bibinfo{edition}{1} edition.
\newblock \bibinfo{publisher}{Morgan Kaufmann Publishers}.

\bibitemdeclare{article}{White}
\bibitem{White}
\bibinfo{author}{Timothy~M. \surnamestart White\surnameend} \&
  \bibinfo{author}{Thomas~P. \surnamestart Way\surnameend}
  (\bibinfo{year}{2006}): \emph{\bibinfo{title}{{jFAST: A Java Finite Automata
  Simulator}}}.
\newblock {\sl \bibinfo{journal}{SIGCSE Bull.}}
  \bibinfo{volume}{38}(\bibinfo{number}{1}), pp. \bibinfo{pages}{384--388},
  \doi{10.1145/1124706.1121460}.

\end{thebibliography}
\end{document}